\definecolor{darkred}{RGB}{173,34,48}
\def\rr{\mathbb{R}}
\def\cc{\mathbb{C}}
\def\zz{\mathbb{Z}}
\def\dd{\mathrm{d}}
\def\ii{\mathrm{i}}
\theoremstyle{definition}
\newtheorem{defi}{Definition}[part]
\theoremstyle{plain}
\newtheorem{pro}[defi]{Proposition}
\date{\today}
\title{Moduli Space of Paired Punctures, Cyclohedra and Particle Pairs on a Circle} 
\begin{document}
\author[a,b]{Zhenjie Li}
\author[a,b]{,\,Chi Zhang}
\affiliation[a]{CAS Key Laboratory of Theoretical Physics, Institute of Theoretical Physics, Chinese Academy of Sciences, Beijing 100190, China}
\affiliation[b]{School of Physical Sciences, University of Chinese Academy of Sciences, No.19A Yuquan Road, Beijing 100049, China}
\emailAdd{lizhenjie@itp.ac.cn}
\emailAdd{zhangchi@itp.ac.cn}
\date{\today}
\abstract{In this paper, we study a new moduli space $\mathcal{M}_{n+1}^{\mathrm{c}}$, which is obtained from $\mathcal{M}_{0,2n+2}$ by identifying pairs of punctures. We find that this space is tiled by $2^{n-1}n!$ \emph{cyclohedra}, and construct the canonical form for each chamber. We also find the corresponding Koba-Nielsen factor can be viewed as the potential of the system of $n{+}1$ pairs of particles on a circle, which is similar to the original case of $\mathcal{M}_{0,n}$ where the system is $n{-}3$ particles on a line. We investigate the intersection numbers of chambers equipped with Koba-Nielsen factors. Then we construct cyclohedra in kinematic space and show that the scattering equations serve as a map between the interior of worldsheet cyclohedron and kinematic cyclohedron. Finally, we briefly discuss string-like integrals over such moduli space.}

	\maketitle
	\section{Introduction}

The scattering processes are closely linked to the Riemann surfaces with punctures since the birth of string theory. In particular, the scattering of $n$ massless particles can be described as a \emph{localized} integral over some moduli spaces in the context of Witten-RSV formalism~\cite{Witten:2003nn,Roiban:2004yf} or CHY formalism~\cite{Cachazo:2013gna,Cachazo:2013hca,Cachazo:2013iea}. In other words, the tree-level $S$-matrix of massless particles can be computed based on a map from some moduli space to kinematic space, which in general dimension is provided by the so-called scattering equations
\[
\sum_{j\neq i}\frac{k_{i}\cdot k_{j}}{z_{i}-z_{j}} = 0    \qquad\text{for $i\in\{1,\ldots,n\}$}. 
\]   
This map itself has been studied in detail in a new framework~\cite{Arkani-Hamed:2017mur} and recast as a pushforward from differential forms on the moduli space $\mathcal{M}_{0,n}$, which are called as \emph{worldsheet forms}, to differential forms on the kinematic space of $n$ massless particles, which are called as \emph{scattering forms}. The combinatorial and geometrical aspects of moduli space and kinematic space become hence crucial in this context. Such idea has been further developed by considering extended logarithmic differential forms as in~\cite{He:2018pue} and introducing subspace in kinematic space as in~\cite{He:2018pue,Salvatori:2018aha}.

In another closely-related approach to amplitudes---intersection theory~\cite{Mizera:2017cqs,Mizera:2017rqa}, the combinatorial and geometrical properties of moduli space also play an important role. In this formalism, scattering amplitudes and related physical qualities are understood as intersection numbers of various twisted cycles and(or) cocycles which characterize the boundary information of the moduli space concerned. In the case of $\mathcal{M}_{0,n}$, a basis of twisted cocycles turn out to be the Parke-Taylor forms which are precisely the most important worldsheet forms considered in~\cite{Arkani-Hamed:2017mur}.  The combinatorial structure of some physical qualities, for example, the planar bi-adjoint $\phi^{3}$ amplitudes in~\cite{Cachazo:2013iea,Arkani-Hamed:2017mur} and the inverse of KLT matrix $m_{\alpha^{\prime}}(\alpha\vert\beta)$ in~\cite{Mizera:2016jhj,Mizera:2017cqs}, are governed by the boundary structure of $\mathcal{M}_{0,n}(\mathbb{R})$, which turn out to be the so-called \emph{assocaihedra}~\cite{stasheff1963homotopyI,stasheff1963homotopyII}.

In this article, we will use these two tools---positive geometry~\cite{Arkani-Hamed:2017tmz} and intersection theory---to study another interesting moduli space, which is obtained from $\mathcal{M}_{0,2n+2}$ by imposing identifications for pairs of particles, such as $z_{i}=-z_{i+n+1}$, equipped with the corresponding Koba-Nielsen factor. This is a $n$-dimensional space and denoted as $\mathcal{M}_{n+1}^{\mathrm{c}}$ in the following, and the corresponding Koba-Nielsen factor can be inherited from the usual Koba-Nielsen factor for $2n+2$ particles by imposing the above identifications. In particular, the real moduli space $\mathcal{M}_{n+1}^{\mathrm{c}}(\mathbb{R})$ is tiled by $n$-dimensional \emph{cyclohedra} $W_{n}$. The main result of this article is to recover this structure in kinematic space via intersection theory and positive geometry. To this end, we first introduce the ingredients in section \ref{sec2}, i.e., Parke-Taylor forms and scattering equations, and study their properties further. To do that, We will give the precise definition of $\mathcal{M}_{n+1}^{\mathrm{c}}(\mathbb{R})$ and $\mathcal{M}_{n+1}^{\mathrm{c}+}(\mathbb{R})$ and show:  (i) $\mathcal{M}_{n+1}^{\mathrm{c}}(\mathbb{R})$ is tiled by $2^{n-1}n!$ $\mathcal{M}_{n+1}^{\mathrm{c}+}(\mathbb{R})$ which can be identified with a cyclohedron $ W_{n}$, (ii) the corresponding Parke-Taylor form is the canonical form on $\mathcal{M}_{n+1}^{\mathrm{c}+}(\mathbb{R})$,
and (iii) the scattering equations as the stationary points of a particle system have $2^{n-1}n!$ solutions. 

With this knowledge of moduli space $\mathcal{M}_{n+1}^{\mathrm{c}}$, we move to the kinematic space. The way through intersection theory is quite straightforward, the intersection number of two twisted cocycles $\alpha$ and $\beta$ tell us the result and the structure of $m_{\alpha^{\prime}}(\alpha\vert\beta)$, and shown in section \ref{sec3}. While in section \ref{sec4}, we will first construct a cyclohedron in the kinematic space, then rewriting the scattering equations as a map from the moduli cyclohedron to the kinematic cyclohedron, as done in~\cite{Arkani-Hamed:2017mur}. As a result of this pushforward, the amplitudes $m(\alpha\vert \beta)$ can be extracted from the scattering form by pullbacking to the corresponding subspace. It becomes obvious that $m(\alpha\vert\beta)$ is given by $m_{\alpha^{\prime}}(\alpha\vert\beta)$ in the limit of $\alpha^{\prime}\to 0$.

Last but not the least, we will consider the natural integrals on such moduli space that are analogs of the so-called $Z$-integrals for open-string case~\cite{Mafra:2016mcc}.  Interestingly, the Koba-Nielsen factor for integrals of open strings and one closed-string insertion is a special limit of our case. However, this limit needs to be taken carefully since some integrals don't behave well under this limit. Here we just give a brief survey on its $\alpha^{\prime}$-expansion and number theoretical properties before taking this limit.

	\section{Moduli Space of Pairs of Punctures on the Riemann Sphere} \label{sec2}

In this section, we consider the moduli space of distinct pairs of punctures $(z,z')$ on the Riemann sphere $\mathbb{CP}^1$, 
where two punctures in each pair are related by $z'=-z$. 

It's clear that an arbitrary $\operatorname{SL}(2,\cc)$ transformation on the Riemann sphere 
will not keep two punctures $z$ and $z'$ in a pair still satisfying the relation that $z'=-z$. One can check that the surviving transformation is
\begin{equation*}
    z\mapsto \lambda z \quad\text{and}\quad 
    z\mapsto \frac{1}{z},
\end{equation*}
where $\lambda$ is a non-zero complex number. Therefore, the remaining $\operatorname{SL}(2,\cc)$ 
effectively is the group $(\cc-\{0\}) \rtimes \zz_2$ whose elements are transformations
\begin{equation}
    (\lambda,a):z\mapsto \lambda z^{a},
\end{equation}
where $\lambda\in (\cc-\{0\})$ and $a=\pm 1$.\footnote{ 
It's the simi-product of $(\cc-\{0\})$ and $\zz_2$ since the multiplication rule is $(\lambda_1,a_1)\cdot (\lambda_2,a_2)=(\lambda_1\lambda_2^{a_1},a_1a_2)$.}

Now, suppose there are $n+1$ distinct such pairs of two punctures
$\{(z_i,z'_i)\,:\, 0\leq i\leq n\}$ on $\mathbb{CP}^1$. 
Our complex moduli space is the quotient of this configurations space, modulo the remaining redundancy $(\cc-\{0\}) \rtimes \zz_2$. 
Let's denote it by $\mathcal M^{\mathrm{c}}_{n+1}(\mathbb C)$. 
For future convenience, it's useful to enlarge the
index set to $\zz_{2n+2}=\{0,\dots,n,n+1,\dots,2n+1\}$ and
identify $z'_i$ with $z_{i+n+1}$ or $z_{\tilde \imath}$,
where $\tilde \imath=i+n+1\in \zz_{2n+2}$.

\subsection{Real Moduli Space \texorpdfstring{$\mathcal{M}^{\mathrm{c}}_{n+1}(\mathbb{R})$}{M(n+1)} and Cyclohedron \texorpdfstring{$W_{n}$}{W(n)}} \label{sec2.1}

In the context of string theory, the moduli space $\mathcal M_{0,n}(\cc)$ of
$n$ distinct punctures on the Riemann sphere $\mathbb{CP}^1$ and its real part $\mathcal M_{0,n}(\rr)$ of $n$ distinct punctures 
on the circle (i.e. the boundary of the disk) $\mathbb{RP}^1=S^1$
are used to calculate the tree-level amplitude of closed strings and open strings respectively. 
Similarly, we can consider the real part of our complex moduli space $\mathcal M_{n+1}^{\mathrm{c}}(\mathbb C)$
by setting all these pairs of punctures on the unit circle of $\cc$. 
We denote this new real moduli space by $\mathcal{M}^{\mathrm{c}}_{n+1}(\mathbb R)$.
As in the complex case, the $\operatorname{SL}(2,\mathbb{R})$ redundancy now
reduces to $\operatorname{U}(1)\rtimes \zz_2$, which is generated by
\[
    z\mapsto \exp(\ii\theta)z
    \quad\text{and}\quad
    z\mapsto 1/z=\overline{z},
\]
where $\theta\in \rr$ and $|z|=1$. Since all punctures now live on the unit circle, it's very convenient to use their arguments 
$\theta_i$ to coordinate them, i.e., $z_i=\exp(\ii \theta_i)$. 
Besides, it's also convenient to introduce another set of useful real variables
\[
    x_i=\tan(\theta_i/2)=\ii\, \frac{1-z_i}{1+z_i}.
\]
In fact, it's the natrual coordinates of another possible great circle $\mathbb{R}\cup \{\infty\}$ on the Riemann sphere where pairs of punctures can live.

A lot of works on $\mathcal M_{0,n}(\rr)$ has revealed some 
hidden structures of this moduli space. For example, 
its Fulton-MacPherson compactification can be tiled by \emph{associahedra}~\cite{devadoss1999tessellations}.
In the rest of this subsection, we will show some similar structures on $\mathcal{M}^{\mathrm{c}}_{n+1}(\mathbb R)$.

Firstly, we fix the $\operatorname{U}(1)\rtimes \zz_2$ redundancy
by setting $z_0=-z_{\tilde 0}=1$ and limiting $z_1$ on the upper semi-circle $S^1_+$. 
Since either $z$ or $-z$ is on the upper semi-circle for any pairs of punctures $(z,-z)\neq (1,-1)$ on the unit circle,
one can use the one living on the $S^1_+$ to represent this pair.
Therefore, after the gauge fixing, 
$\mathcal M^{\mathrm{c}}_{n+1}(\mathbb R)$ is described by
\[
    \mathcal M^{\mathrm{c}}_{n+1}(\mathbb R)=\{(z_1,\pm z_2,\dots, \pm z_n)
    \in (S_+^1)^n\}-\Delta,
\]
where 
\[
    \Delta=\{(z_1,\pm z_2,\dots, \pm z_n)\in (S_+^1)^n\,:\, 
    \exists\, i,j \text{ s.t. } z_i^2=z_j^2 \text{ or }
z_i^2=1\}.
\]
One should exclude $\Delta$ because that points in this subset represent some pairs of punctures on the unit circle collide.

Since there are two choices $z_i$ and $-z_i$ for all $2\leq i\leq n$, 
$\mathcal M_{n+1}^{\mathrm{c}}(\mathbb R)$ is a $2^{n-1}$ copies of the following space
\[
    \{(p_1,p_2,\dots, p_n)\in (S_+^1)^n\}-\Delta
    =\{(1,p^2_1,p^2_2,\dots, p^2_n)\in (S^1)^{n+1}\}-\Delta',
\]
where 
\[
    \Delta'=\{(1,q_1,q_2,\dots, q_n)\in (S^1)^{n+1}\,:\,
     \exists\, i,j \text{ s.t. $q_i=q_j$ or $q_i=1$}\},
\]
This is a known moduli space $\mathrm{F}(S^1,n+1)/S^1$ which has been well investigated in \cite{devadoss2002space}. Therefore,
our real moduli space $\mathcal M_{n+1}^{\mathrm{c}}(\mathbb R)$ is 
$
    2^{n-1} \mathrm{F}(S^1,n+1)/S^1.
$

The real Fulton-MacPherson compactification of
$\mathrm{F}(S^1,n+1)/S^1$, denoted by $\overline{\mathcal Z}^{n+1}$, can be tiled by $n!$
\textit{cyclohedra} \cite{devadoss2002space}, so the same compactification of our moduli space
can be tiled by $2^{n-1}n!$ cyclohedra.
Suppose $\mathsf{proj}:\overline{\mathcal Z}^{n+1}\to \mathrm{F}(S^1,n+1)/S^1$ 
is the compactification map, every cyclohedron is given by
\[
    W_{n}(\sigma)=
    \overline{\mathsf{proj}^{-1}\left(
    \{0<\varphi_{\sigma(1)}<\varphi_{\sigma(2)}<\cdots<\varphi_{\sigma(n)}<2\pi\}    
    \right)},
\]
where $\varphi_i$ is the central angle of $p^2_i$ and $\sigma\in S_n$ is a permutation of indices set $\{1,\ldots,n\}$. In our case, every cyclohedron is given by
\begin{align}
W_{n}(\sigma,\{\mathsf{s}_{i}\})&=
\overline{\mathsf{proj}^{-1}\left(
\{0<\theta^{\mathsf{s}_{\sigma(1)}}_{\sigma(1)}<\theta^{\mathsf{s}_{\sigma(2)}}_{\sigma(2)}<\cdots<\theta^{\mathsf{s}_{\sigma(n)}}_{\sigma(n)}<\pi\}
\right)}, \noindent \\
&=:\overline{\mathsf{proj}^{-1}\left(\mathcal{M}_{n+1}^{\mathrm{c}+}(\sigma,\{\mathsf{s}_{i}\})\right)}
\end{align}
where $\mathsf{s}_i=\pm 1$ and 
\[
    \theta^+_i=\theta_i,\quad \theta^-_i=\theta_i-\pi.
\]

From this construction, we can conclude that facets of the cyclohedron $\mathcal{M}_{n+1}^{\mathrm{c}+}(\sigma,\{\mathsf{s}_{i}\})$
(before compactification) lay on hyperplanes defined by equations
$\mathsf{s}_{\sigma(i)} z_{\sigma(i)}-\mathsf{s}_{\sigma(i+1)} z_{\sigma(i+1)}=0$.
In terms of $\{\theta_i\}$, these hyperplane equations are
\[
    0\leq \theta_i < 2\pi \quad \text{and}\quad
    \theta_i-\theta_j=\begin{cases}
        0,       & \text{if $\mathsf{s}_i-\mathsf{s}_j=0 \,\bmod \,2$,}
        \\
        \pm \pi, & \text{if $\mathsf{s}_i-\mathsf{s}_j=\pm 1$.}
    \end{cases}
\]

For example, hyperplanes for $n=2$ are shown in the following diagram:
\begin{center}
    \begin{tikzpicture}[scale=2]
        \filldraw[fill=green!20] (0,0) -- (0,2) -- (1,2) -- (1,0)
        -- (0,0);
        \draw[thick,->] (0,0) -- (2.5,0) node[right] {$\theta_1$};
        \draw[thick,->] (0,0) -- (0,2.5) node[above] {$\theta_2$};
        \draw[thick] (0,0) -- (2,2);
        \draw[thick] (0,0) node[below] {$0$}
        -- (1,0) node[below=2pt] {$\pi$}
        -- (2,0) node[below] {$2\pi$}
        -- (2,2)
        -- (0,2) node[left] {$2\pi$}
        -- (0,1) node[left] {$\pi$}
        -- (0,0);
        \draw[thick] (1,0) -- (2,1) -- (0,1)
        -- (1,2) -- (1,0);
    \end{tikzpicture}
\end{center}
with the identification $(\theta_1,\theta_2)\sim (2\pi-\theta_1,2\pi-\theta_2)$ due to the $\zz_2$ redundancy.

As before, we can fix this redundancy by setting $\theta_1 < \pi$,
so there are $4=2^1\times 2!$ chambers (green ones in the above
diagram) corresponding to the cyclohedron before compactification.
After compactification, each chamber becomes a hexagon, the 2-dimensional cyclohedron.
For example, the triangle $0< \theta_1 <\theta_2<\pi$ actually becomes
\begin{center}
    \begin{tikzpicture}[scale=2]
        \draw[thick] (0,0.2) -- (0,1.8) -- (0.2,2) --
        (1.8,2) -- (2,1.8) -- (0.2,0) -- (0,0.2);
        \draw (1,1) node[below=10pt,right=10pt] {$\theta_1^+=\theta_2^+$};
        \draw (1,2) node[above] {$\theta_2^-=0$};
        \draw (0,1) node[left] {$\theta_1^+=0$};
        \draw[thick,red] (0,1.8) -- (0.2,2);
        \draw[thick,red] (1.8,2) -- (2,1.8);
        \draw[thick,red] (0.2,0) -- (0,0.2);
        \draw[thick,red,->] (-0.3,2.1) node[left,black] {$\theta_1^+=\theta_2^-=0$}
        .. controls (-0.2,2.2) and (-0.1,2.1) .. (0.05,1.95);
        \draw[thick,red,->] (-0.3,-0.1) node[left,black] {$\theta_1^+=\theta_2^+=0$}
        .. controls (-0.2,-0.2) and (-0.1,-0.1) .. (0.05,0.05);
        \draw[thick,red,->] (2.3,2.1) node[right,black] {$\theta_1^-=\theta_2^-=0$}
        .. controls (2.1,2.2) and (2.05,2.1) .. (1.95,1.95);
    \end{tikzpicture}
\end{center}
where three red facets are the result of compactification.
Notice that, every vertex should be blowed-up in the compactification such that no two cyclohedra intersect at a point. 

For general $n$, we can construct $W_{n}(\sigma,\{\mathsf{s}_{i}\})$ from the corresponding $\mathcal{M}_{n+1}^{\mathrm{c}+}(\sigma,\{\mathsf{s}_{i}\})$ and the compactification of the whole space $\mathcal{M}^{\mathrm{c}}_{n+1}(\mathbb R)$ by the following steps:
\begin{enumerate}[\quad (1)]
    \item For every chamber before compactification, 
        label $(ij)$ on 
        each codimension-one face (facet) $H$ defined by $z_{i}=z_j$. 
    \item For the face $H$ with higher codimension $k$ defined 
        by $z_{i_1}=\cdots=z_{i_k}$ of these chambers, 
        truncate it and label 
        $(i_1i_2\cdots i_k)$ on the new created facet. 
        Repeat this operation from low-dimensional faces to high-dimensional faces. 
    \item Glue the faces with the same label in different cyclohedron,
        they will be the same face after compactification.
\end{enumerate}

Here we give a direct combinatorial definition of cyclohedron. 
For any polytope $P$, there's a natural partial order of faces on it: for 
face $a$ and $b$,
\[
    a\leq b \quad \text{if and only if}\quad a\subset b,
\]
then $P$ defined a partially ordered set (or poset for short) $(P,\leq)$. 
Conversely, we can use a poset to define a polytope whose associated
poset of faces is isomorphic to the given poset. 

\begin{defi}[Cyclohedron]\label{def-1}
Let $\operatorname{Cyc}(n)$ be the poset of all centrally symmetric dissections 
of a convex $(2n+2)$-gon using non-intersecting diagonals, 
ordered such that $b \leq  b'$ if  $b$ is obtained from $b'$ by 
adding (non-intersecting) diagonals. The $n$-dimensional cyclohedron $W_n$ is 
convex polygon whose face poset is isomorphic to $\operatorname{Cyc}(n)$. 
\end{defi}

For example, $W_2$ is a hexagon.
\begin{center}
    \def\fp#1#2{
        \begin{tikzpicture}
            \node[draw,minimum size=0.6cm,regular polygon,regular polygon sides=6] (a) {};
            \draw (a.corner #1) -- (a.corner #2);
        \end{tikzpicture}
    }
    \def\fpo#1#2#3#4{
        \begin{tikzpicture}
            \node[draw,minimum size=0.6cm,regular polygon,regular polygon sides=6] (a) {};
            \draw (a.corner #1) -- (a.corner #2);
            \draw (a.corner #3) -- (a.corner #4);
        \end{tikzpicture}
    }
    \def\fpoo#1#2#3#4{
        \begin{tikzpicture}
            \node[draw,minimum size=0.6cm,regular polygon,regular polygon sides=6] (a) {};
            \draw (a.corner #1) -- (a.corner #2);
            \draw (a.corner #3) -- (a.corner #4);
            \draw (a.corner #1) -- (a.corner #3);
        \end{tikzpicture}
    }
    \begin{tikzpicture}[baseline={([yshift=-.5ex]current bounding box.center)}]
        \node[draw,thick,minimum size=3cm,regular polygon,regular polygon sides=6] (a) {};
        \coordinate (p1) at (60:2cm);
        \coordinate (p2) at (60*2:2cm);
        \coordinate (p3) at (60*3:2cm);
        \coordinate (p4) at (60*4:2cm);
        \coordinate (p5) at (60*5:2cm);
        \coordinate (p6) at (60*6:2cm);
        \draw (p1) node {\fpoo{1}{3}{4}{6}};
        \draw (p2) node {\fpoo{3}{1}{6}{4}};
        \draw (p3) node {\fpoo{3}{5}{6}{2}};
        \draw (p4) node {\fpoo{5}{3}{2}{6}};
        \draw (p5) node {\fpoo{5}{1}{2}{4}};
        \draw (p6) node {\fpoo{1}{5}{4}{2}};
        \draw ($(p1)!0.5!(p2)$) node {\fpo{1}{3}{4}{6}};
        \draw ($(p2)!0.5!(p3)$) node {\fp{3}{6}};
        \draw ($(p3)!0.5!(p4)$) node {\fpo{3}{5}{6}{2}};
        \draw ($(p4)!0.5!(p5)$) node {\fp{5}{2}};
        \draw ($(p5)!0.5!(p6)$) node {\fpo{5}{1}{2}{4}};
        \draw ($(p6)!0.5!(p1)$) node {\fp{1}{4}};
        \node[draw,minimum size=0.6cm,regular polygon,regular polygon sides=6] (a) {};
    \end{tikzpicture}
\end{center}

The definitive property of cyclohedron is its geometric factorization \cite{devadoss2002space} that every facet of a $n$-dimensional cyclohedron $W_n$ is the product of a $(n-i)$-dimensional cyclohedron $W_{n-i}$ and a $(i-1)$-dimensional associahedron $A_{i-1}$, i.e.
\[
    H\cong W_{n-i}\times A_{i-1},
\]
where $H$ is a facet of $W_n$ and $i$ is a positive integer.
In the following sections, we will see it again and again in different forms.

\subsection{Parke-Taylor Forms as Canonical Forms of Moduli Space Cyclohedra}

As shown in the last subsection, the interior of a cyclohedron $W_{n}(\sigma,\{\mathsf{s}_{i}\})$ is a 
simplex whose facets lay on the hyperplanes defined by equations 
\[
    \mathsf{s}_{\sigma(i)} z_{\sigma(i)}- \mathsf{s}_{\sigma(i+1)} z_{\sigma(i+1)}=0.
\]
Therefore, we can write down the Parke-Taylor form
as the canonical form \cite{Arkani-Hamed:2017tmz} of this simplex that
\begin{equation}
    \mathsf{PT}_n(
    \sigma(1)^{\mathsf s_{\sigma(1)}},\dots,
    \sigma(n)^{\mathsf s_{\sigma(n)}}):=\frac{\,2z_0\, \dd (\mathsf s_1z_1)\wedge \cdots \wedge\dd (\mathsf s_nz_n)}{(z_0- \mathsf s_{\sigma(1)} z_{\sigma(1)})(\mathsf s_{\sigma(1)} z_{\sigma(1)}- \mathsf s_{\sigma(2)} z_{\sigma(2)})\cdots(\mathsf s_{\sigma(n)} z_{\sigma(n)}+z_{0})}\:, \label{PTfactor}
\end{equation}
The appearance of $z_0$ on the numerator makes $\mathsf{PT}_n$ invariant under
$(\lambda,a):z\mapsto \lambda z^a$ for
all $(\lambda,a)\in (\cc-\{0\})\rtimes \zz_2$.
As revealed in \cite{Mizera:2017cqs}, Parke-Taylor forms can be rewritten as
\begin{equation*}
\label{dlogPT}
\begin{aligned}
\mathsf{PT}_n&(\sigma(1)^{\mathsf s_{\sigma(1)}},\dots,
\sigma(n)^{\mathsf s_{\sigma(n)}})\\
&=(-1)^{n}\operatorname{sgn}(\sigma)\,\dd \log \left( \frac{z_{0}-\mathsf s_{\sigma(1)} z_{\sigma(1)}}{\mathsf s_{\sigma(1)} z_{\sigma(1)}- \mathsf s_{\sigma(2)} z_{\sigma(2)}} \right)\wedge \cdots \wedge \dd \log \left( \frac{\mathsf s_{\sigma(n-1)} z_{\sigma(n-1)}- \mathsf s_{\sigma(n)} z_{\sigma(n)}}{\mathsf s_{\sigma(n)} z_{\sigma(n)}+z_{0}} \right),
\end{aligned}
\end{equation*}
so they are $\dd \log$ forms.

What's more, Parke-Taylor forms 
still work well after compactification (locally it's a series of blowing-ups), so
they are also the \textit{canonical forms} of cyclohedra%
\footnote{
More precisely, it is the pull-back of Parke-Taylor forms according to the compactification map that are the canonical forms of cyclohedra.
}%
. Since for every facet
of a cyclohedron, it can be factorized into a cyclohedron and an associahedron, the residue of 
the canonical form of cyclohedron at the facet should also be
factorized into the corresponding canonical forms.

To see it, consider the facet of 
$W_{n}(\operatorname{id},\{+\})$ whose label is 
$(i\cdots j)$. It corresponds to the blowing-up
of the pinch of the coordinates $\{z_i,\dots,z_j\}$ of the moduli space.
Let $z_k=z_i+ty_k$ for $i+1\leq k\leq j$, where $[y_{i+1},\dots,y_{j}]$ 
are projective coordinates of $\mathbb{P}^{j-i-1}$, then 
the residue of $\mathsf{PT}_n(1^+2^+\cdots n^+)$ at $t=0$ 
can be written as
\[
    \begin{aligned}
    \operatorname{Res}_{t=0}\mathsf{PT}_n(1^+2^+\cdots n^+)&=\operatorname{Res}_{t=0}\left[(-1)^{n}\dd \log \left( \frac{z_{0}- z_{1}}{ z_{1}-  z_{2}} \right)\wedge \cdots \wedge \dd \log \left( \frac{ z_{n-1}- z_{n}}{ z_{n}+z_{0}} \right)\right]\\
    &=(\pm) \,\mathsf{PT}_{n-(j-i)}(1^+2^+\cdots i^+(j+1)^+\cdots n^+)\wedge \Omega,
    \end{aligned}
\]
here $\mathsf{PT}_{n-(j-i)}$ is the canonical form of a \emph{cyclohedron}, while
\[
    \Omega=\dd \log \left( \frac{0- y_{i+1}}{y_{i+1}- y_{i+2}} \right)\wedge \cdots \wedge \dd \log \left( \frac{y_{j-2}-y_{j-1}}{y_{j-1}-y_j} \right)
\]
is the canonical form of an \emph{associahedron} up to a sign since it can be rewritten as
\[
    \Omega=\frac{(\pm)\,\dd y_{i+1}\wedge \cdots\wedge \dd y_{j-1}}{(0- y_{i+1})(y_{i+1}- y_{i+2})\cdots (y_{j-1}-1)}.
\]
in the piece of $\mathbb{P}^{j-i-1}$ where $y_j=1$.

Apparently, there are $2^{n}n!$ different Parke-Taylor forms 
corresponding to different cyclohedra. Like the case of associahedra, there are some linear relations between these Parke-Taylor forms. A simple relation is
\[
    \mathsf{PT}_n(
    \sigma(1)^{\mathsf s_{\sigma(1)}},\dots,
    \sigma(n)^{\mathsf s_{\sigma(n)}})
    =(-1)^n\mathsf{PT}_n(
    \sigma(n)^{-\mathsf s_{\sigma(n)}},\dots,
    \sigma(1)^{-\mathsf s_{\sigma(1)}}),
\]
some not so trivial relations are
\begin{equation}\label{ptr}
    \begin{aligned}
        \mathsf{PT}_n&(1^+2^{\mathsf s_2}\dots n^{\mathsf s_n})
        +\mathsf{PT}_n(2^{\mathsf s_2}1^+\dots n^{\mathsf s_n})
        +\cdots +
        \mathsf{PT}_n(2^{\mathsf s_2}\dots n^{\mathsf s_n}1^+)+
        \\
        &\mathsf{PT}_n(1^-2^{\mathsf s_2}\dots n^{\mathsf s_n})
        +\mathsf{PT}_n(2^{\mathsf s_2}1^-\dots n^{\mathsf s_n})
        +\cdots +
        \mathsf{PT}_n(2^{\mathsf s_2}\dots n^{\mathsf s_n}1^-)=0
    \end{aligned}
\end{equation}
and its permutations. The second kind of relations are analogues of 
\textit{Kleiss-Kuijf relations} in the case of associahedron. We conjecture that
the rank of these Parke-Taylor forms is $(2n-1)!!$ because of these relations.

To end this subsection, we express Parke-Taylor forms
in terms of $\{x_i\}$ and $\{\theta_i\}$. 
They live on the real line, so it's more convenient
for some purpose or other. In terms of $\{x_i\}$, 
Parke-Taylor forms are written as
\[
    \mathsf{PT}(
    \sigma(1)^{\mathsf s_{\sigma(1)}},\dots,
    \sigma(n)^{\mathsf s_{\sigma(n)}})=
    \frac{(x_0+x_0^{-1})\,\dd (\mathsf s_1x^{\mathsf s_1}_1)\wedge \cdots\wedge  \dd (\mathsf s_nx^{\mathsf s_n}_n)}{
    (x_0-\mathsf s_{\sigma(1)}x_{\sigma(1)}^{\mathsf s_{\sigma(1)}})\cdots 
    (\mathsf s_{{\sigma(n)}}x_{{\sigma(n)}}^{\mathsf s_{{\sigma(n)}}}+x_0^{-1})
    },
\]
where $s_{i}=\pm 1$.
Then it's direct to carry out Parke-Taylor forms
in terms of $\{\theta_i^{\mathsf s_i}\}$ by using 
$\mathsf s_ix_i^{\mathsf s_i}=\tan(\theta^{\mathsf s_i}/2)$,
\[
    \mathsf{PT}(
    \sigma(1)^{\mathsf s_{\sigma(1)}},\dots,
    \sigma(n)^{\mathsf s_{\sigma(n)}})=
    \frac{\sin(\theta^\mathsf s_{0,n+1}/2) \,\dd \theta_1\wedge \cdots \wedge \dd \theta_n}{\sin(\theta^\mathsf s_{0,\sigma(1)}/2)
    \cdots 
    \sin(\theta^\mathsf s_{\sigma(n-1),\sigma(n)}/2)\sin(\theta^\mathsf s_{\sigma(n),n+1}/2)},
\]
where $\theta^\mathsf s_{i,j}:=\theta_i^{\mathsf s_i}-\theta_{j}^{\mathsf s_{j}}$.

\subsection{Koba-Nielsen Factor and Scattering Equations: Particle Pairs on a Circle}

In order to connect to the kinematic space, we should introduce 
`Mandelstam variables' $s_{ij}$ for any pair of punctures with the usual properties, i.e. $s_{ii}=0$, $s_{ij}=s_{ji}$ 
and the conservation of momentum
\begin{equation}\label{mom_con}
    \sum_{j=0}^{2n+1}s_{ij}=0,
\end{equation}
but with additional identifications $s_{\tilde\imath \tilde\jmath}=s_{ij}$ since $z_i-z_j=0$ and $z_{\tilde \imath}-z_{\tilde \jmath}=0$ are the same hyperplane in the moduli space. It's natural to write down a factor
\begin{equation}\label{KN}
    \mathcal I=\prod_{0\leq i<j\leq 2n+1}(z_i-z_j)^{\alpha' s_{ij}}
\end{equation}
as the analogue of the so-called Koba-Nielsen factor in string theory. However, with the new identification $s_{ij}=s_{\tilde{\imath}\tilde{\jmath}}$ of Mandelstam variables, this factor $\mathcal I$ can be rewritten to drop this redundancy. For future use, it's convenient to rewrite it as
\begin{equation}\label{thetaKN}
    \mathcal I=\prod_{0\leq i<j\leq n}\sin\left(\frac{\theta_{i}-\theta_j}{2}\right)^{2\alpha' s_{ij}}
    \cos\left(\frac{\theta_{i}-\theta_j}{2}\right)^{2\alpha' s_{i\tilde \jmath}}
\end{equation}
in terms of $\{\theta_{i}\}$, or,
\begin{align}
    \mathcal{I} &=  \prod_{0\leq i<j\leq n} \Biggl(\frac{(x_{i}-x_{j})^{2}}{(1+x_{i}^{2})(1+x_{j}^{2})}\Biggr)^{s_{ij}} \Biggl(\frac{(1+x_{i}x_{j})^{2}}{(1+x_{i}^{2})(1+x_{j}^{2})}\Biggr)^{s_{i\tilde{\jmath}}}  \label{KNx}
\end{align}
in terms of $\{x_{i}\}$.

As shown in \cite{aomoto2011theory}, the number of stationary points of the Koba-Nielsen factor
equals to the number of associahedra in the moduli space $\mathcal M_{0,n}(\mathbb R)$, which is a highly non-trivial result because of the Poincar\'e dual. Therefore, 
it's believed that the number of stationary points of our new Koba-Nielsen
factor eq.\eqref{KN}
should equal the number of cyclohedra $2^{n-1}n!$. In the rest of this
subsection, we will prove that it's indeed the number of stationary points.

The stationary points of Koba-Nielsen factor is determined by 
the equation $\dd \log(\mathcal I)=0$,
then it gives a set of equations in $z$-variables, which are 
called \textit{scattering equations},
\begin{equation}
E_i:=\frac{\partial}{\partial z_i} \log(\mathcal I)
=\sum_{\tiny \substack{0\leq j\leq 2n+1\\ j\neq i}} \frac{s_{ij}}{z_i-z_{j}}=0\:, \label{scatteringequation} 
\end{equation}
for $i=1$, $\dots$, $n$. Or in terms of $\theta$-variables, 
from eq.\eqref{thetaKN}, the scattering equations are
\begin{equation}
    \sum_{\tiny \substack{0\leq j\leq n\\ j\neq i}}(s_{ij}\cot(\theta_{ij}/2)\theta_{i\tilde{\jmath}}-s_{i\tilde \jmath}\tan(\theta_{ij}/2))=0\:,
    \label{SEintheta}
\end{equation}
where $\theta_{ij}=\theta_i-\theta_j$.


This number $2^{n-1}n!$ can be recursively verified by taking soft limits~\cite{Cachazo:2013gna}. However, here we use another strategy proposed by Cachazo, Mizera and Zhang~\cite{Cachazo:2016ror}: each solution of scattering equation can be regarded as one stationary configuration of some particular particle system on some region of kinematic space $\mathcal{K}_{+}$.  
\begin{figure}[t]
\begin{center}
\includegraphics[scale=0.8]{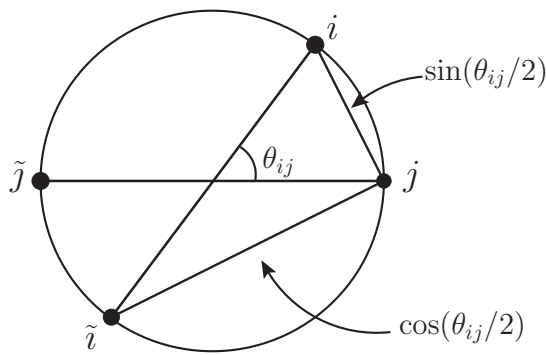}
\end{center}
\caption{Particle pairs on a circle} \label{CNSE}
\end{figure}

In our case, the region of kinematic space $\mathcal{K}_{+}$ is defined by all $s_{ij}$ and $s_{i\tilde{\jmath}}$ with $j\neq i$ are positive, this particle system is a $2$-dimensional system and consists of $n{+}1$ pairs of particles connected by $n{+}1$ light rigid rods of unit length (see figure \ref{CNSE}), where the centers of all rods are in the same position and each pair of particles connected by a rod have the same charge. If we denote the $i^{\textrm{th}}$ pair of particles as $i$ and $\tilde{\imath}$ and the charge product of particles $i$ and $j$ as $s_{ij}$, then the potential of this system is simply
\begin{equation}
V(\theta)= -2\sum_{0\leq i<j\leq n} \biggl( s_{ij}\log\bigl\lvert \sin(\theta_{ij}/2) \bigr\rvert+ s_{i\tilde{\jmath}}\log\bigl\lvert \cos(\theta_{ij}/2) \bigr\vert \biggr)
\:. \label{potential}
\end{equation}
Here we take no account of the all contribution from particle pairs connected by the rigid rods, and the factor 2 arise from the fact of $s_{ij}=s_{\tilde{\imath}\tilde{\jmath}}$ and $s_{i\tilde{\jmath}}=s_{\tilde{\imath}j}$. Since all particles mutually repel on the region $\mathcal{K}_{+}$, it is obvious that the solutions of scattering equation eq.\eqref{SEintheta} are stationary points of the potential eq.\eqref{potential} on this region, and all these solutions in terms of $\theta$-variables hence are real.

Next, we take a simple counting of stationary configurations of this system. A naive counting gives $2^{n+1}(n+1)!$ since these rods have two distinct endpoints and different orderings of these rods correspond different stationary configurations. However the symmetry $\operatorname{U}(1)\rtimes \zz_2$ mentioned above will reduce this number to $(2^{n}n!)/2$. (One rod can be taken as the reference point, and only one-half of all these orderings are indeed different since this is a central symmetric system.)

	\section{Intersection Numbers} \label{sec3}

In this section, we consider the new geometric structures of 
the moduli space $\mathcal M_{n+1}^{\mathrm{c}}(\rr)$ after introducing the 
Koba-Nielsen factor. 

Since Koba-Nielsen factor 
eq.\eqref{KN} or eq.\eqref{thetaKN} is not a well-defined 
single-valued function on $\mathcal M_{n+1}^{\mathrm{c}}(\rr)$ 
whose branch points are the hyperplains defined by $z_i-z_j=0$ for
all $0\leq i<j\leq 2n+1$, one should assign a branch of Koba-Nielsen factor on each chamber
for the sake of certainty. Therefore, 
we can define \textit{twisted} chambers
$\Delta\otimes \mathcal I_\Delta$ as our basic objects, where 
$\Delta$ is a chamber and $\mathcal I_\Delta$ is a given branch of $\mathcal I$ on $\Delta$. The same consideration still works for 
the compactification of $\mathcal M_{n+1}^{\mathrm{c}}(\rr)$. 

Generally, one can consider a twisted $m$-simplex
$\sigma\otimes \mathcal I_\sigma$,
where $\sigma$ is a topological $m$-simplex in $\mathcal M_{n+1}^{\mathrm{c}}(\rr)$,
then there is a natural inherited boundary map $\partial_{\mathcal I}$ for
any twisted simplex. 
For example, suppose $\sigma=\langle 01\cdots m\rangle$ is a topological $m$-simplex, where $\langle 01\cdots m\rangle$ is the standard notation \cite{Nakahara:2003nw}, the traditional boundary map is 
\[
	\partial^m \sigma=\sum_{k=0}^m (-1)^k\langle 01\cdots\hat{k}\cdots m\rangle,
\]
and the boundary map $\partial^m_{\mathcal I}$ for
any twisted simplex $\sigma\otimes \mathcal I_\sigma$ is
\[
	\partial_{\mathcal I}^m (\sigma\otimes \mathcal I_\sigma)=\sum_{k=0}^m (-1)^k\langle 01\cdots\hat{k}\cdots m\rangle \otimes I_{\sigma}|_{\langle 01\cdots\hat{k}\cdots m\rangle}.
\]
It's not hard to check that 
$\partial^{m-1}_{\mathcal I}\partial^{m}_{\mathcal I}=0$, so there's a 
natural homology group defined by 
\[
	H_m(\mathcal M_{n+1}^{\mathrm{c}}(\rr),\partial_{\mathcal I})
	=\frac{\ker \partial^{m-1}_{\mathcal I}}{\operatorname{im} \partial^{m}_{\mathcal I}}.
\]

Kita and Yoshida \cite{kita1994intersection,kita1994intersection2} investigated these objects for a general space and
a general factor. They developed a homology theory according to these
twisted simplexes and defined a new intersection number of two twisted 
cycles. Loosely speaking, the intersection number is a pairing
\[
	H_n(\mathcal M_{n+1}^{\mathrm{c}}(\rr),\partial_{\mathcal I})\times
	H_{n}(\mathcal M_{n+1}^{\mathrm{c}}(\rr),\partial_{\mathcal I})
	\to \mathbb{C}.
\]
Sabastian \cite{Mizera:2017cqs} showed that 
the intersection numbers of associahedra (with chosen
branches of Koba-Nielsen factor) on the moduli space form
the Kawai-Lewellen-Tye (KLT) kernel \cite{Kawai:1985xq} which relates
the tree-level open string amplitude and closed string amplitude. What's 
more, its $\alpha'\to 0$ limit gives the amplitude of 
bi-adjoint $\varphi^3$-theory  \cite{Arkani-Hamed:2017mur,Frost:2018djd} which relates to the 
Cachazo-He-Yuan (CHY) formula. In this section, 
we consider intersection numbers of cyclohedra. 

However, we will not go into the details of the whole geometric theory. Instead of defining intersection number step by step \cite{aomoto2011theory,kita1994intersection,kita1994intersection2}, we directly give a formula eq.\eqref{int_num} of two cyclohedra with the chosen branches $|\mathcal I|$.

Suppose $\Delta(\alpha)$ and $\Delta(\beta)$ are two 
cyclohedra, $K$ is their shared face with
the lowest codimension $k$, and they are on the different side
of $K$. Suppose $K = H_1 \cap \cdots \cap  H_k$ is the 
intersection of $k$ facets, and the chosen branches of $\mathcal I$
on $\Delta(\alpha)$ and $\Delta(\beta)$ is 
\[
	|\mathcal I| = f(z) \prod_{0\leq i<j\leq n}|z_i-z_j|^{2\alpha' s_{ij}}|z_i-z_{\tilde \jmath}|^{2\alpha' s_{i\tilde\jmath}},
\]
where the factor $f(z)$ will not bother us since it never vanishes on $\mathcal M_{n+1}^{c}(\rr)$. 
Then the intersection number \cite{aomoto2011theory,kita1994intersection,kita1994intersection2} of $\Delta(\alpha)$ and $\Delta(\beta)$ is defined by 
\begin{equation}\label{int_num}
	\langle\Delta(\alpha),\Delta(\beta) \rangle :=
	\left(\frac{1}{2\ii}\right)^k\prod_{i=1}^k\frac{1}{\sin(\alpha'\pi s_{H_i})}\sum_{\text{faces $H$ of $K$}}
	\frac{1}{d_H}
\end{equation}
where $s_{H_i}=2s_{i_1,\dots,i_k}=2\sum_{i_1\leq p<q\leq i_k}s_{pq}$ for $H_i$ whose 
label is $(i_1\cdots i_k)$ and 
\[
	d_{\bigcap_i H_{i}}=\prod_i \left(
		\exp(2\pi \ii \alpha' s_{H_i})-1\right) \quad
	\text{and} \quad d_{\Delta(\alpha)}=1.
\]

For example, 
\[
\begin{aligned}
	\Delta(1^+)&=
	\begin{tikzpicture}[scale=1,
		baseline={([yshift=-.5ex]current bounding box.center)}]
	\draw[thick] (0,0) -- (2,0);
	fill=black, inner sep=1pt,
	\filldraw[red,thick] (0,0) circle (1pt)
	node[left,black] {$(01)$};
	\filldraw[red,thick] (2,0) circle (1pt)
	node[right,black] {$(0\tilde 1)$};
\end{tikzpicture}\\
	\Delta(1^+2^+)&=
	\begin{tikzpicture}[scale=1,
		baseline={([yshift=-.5ex]current bounding box.center)}]
		\draw[thick] (0,0.2) -- (0,1.8) -- (0.2,2) --
		(1.8,2) -- (2,1.8) -- (0.2,0) -- (0,0.2);
		\draw (1,1) node[below=8pt,right=8pt] {$(12)$};
		\draw (1,2) node[above] {$(0\tilde 2)$};
		\draw (0,1) node[left] {$(01)$};
		\draw[thick,red] (0,1.8) -- (0.2,2);
		\draw[thick,red] (1.8,2) -- (2,1.8);
		\draw[thick,red] (0.2,0) -- (0,0.2);
		\draw[thick,red,->] (-0.3,2.1) node[left,black] {$(01\tilde 2)$}
		.. controls (-0.2,2.2) and (-0.1,2.1) .. (0.05,1.95);
		\draw[thick,red,->] (-0.3,-0.1) node[left,black] {$(012)$}
		.. controls (-0.2,-0.2) and (-0.1,-0.1) .. (0.05,0.05);
		\draw[thick,red,->] (2.3,2.1) node[right,black] {$(\tilde 012)$}
		.. controls (2.1,2.2) and (2.05,2.1) .. (1.95,1.95);
	\end{tikzpicture}\quad
	\Delta(1^+2^-)=
	\begin{tikzpicture}[scale=1,
		baseline={([yshift=-.5ex]current bounding box.center)}]
		\draw[thick] (0,0.2) -- (0,1.8) -- (0.2,2) --
		(1.8,2) -- (2,1.8) -- (0.2,0) -- (0,0.2);
		\draw (1,1) node[below=8pt,right=8pt] {$(1\tilde 2)$};
		\draw (1,2) node[above] {$(02)$};
		\draw (0,1) node[left] {$(01)$};
		\draw[thick,red] (0,1.8) -- (0.2,2);
		\draw[thick,red] (1.8,2) -- (2,1.8);
		\draw[thick,red] (0.2,0) -- (0,0.2);
		\draw[thick,red,->] (-0.3,2.1) node[left,black] {$(012)$}
		.. controls (-0.2,2.2) and (-0.1,2.1) .. (0.05,1.95);
		\draw[thick,red,->] (-0.3,-0.1) node[left,black] {$(01\tilde 2)$}
		.. controls (-0.2,-0.2) and (-0.1,-0.1) .. (0.05,0.05);
		\draw[thick,red,->] (2.3,2.1) node[right,black] {$(0\tilde 12)$}
		.. controls (2.1,2.2) and (2.05,2.1) .. (1.95,1.95);
	\end{tikzpicture}
\end{aligned}
\]
and the self intersection number of $\Delta(1^+)$ is
\[
\begin{aligned}
	\langle \Delta(1^+),\Delta(1^+)\rangle &=
	1+\frac{1}{\exp(2\pi \ii \alpha' 2s_{01})-1}+
	\frac{1}{\exp(2\pi \ii \alpha' 2s_{0\tilde 1})-1}\\
	&=\frac{1}{2\ii}\left(\frac{1}{\tan(\alpha'\pi 2s_{01})}
	+\frac{1}{\tan(\alpha'\pi 2s_{0\tilde 1})}\right).
\end{aligned}
\]
For $\Delta(1^+2^+)$ and $\Delta(1^+2^-)$, they intersect at 
the face labeled by $2s_{01}$, so
\[
\begin{aligned}
	\langle \Delta(1^+2^+),\Delta(1^+2^-)\rangle
	&=\frac{1}{2\ii}\frac{1}{\sin(\alpha'\pi 2s_{01})}
	\left(
	1+\frac{1}{\exp(2\pi \ii \alpha' 2s_{012})-1}+
\frac{1}{\exp(2\pi \ii \alpha' 2s_{01\tilde 2})-1}
	\right)\\
	&=\frac{1}{(2\ii)^2}\frac{1}{\sin(\alpha'\pi 2s_{01})}
	\left(
\frac{1}{\tan(\alpha'\pi 2s_{012})}
	+\frac{1}{\tan(\alpha'\pi 2s_{01\tilde 2})}
	\right).
\end{aligned}
\]

Finally, define the `KLT inverse matrix' by 
\[
	m_{\alpha'}(\alpha\vert \beta)=(2\ii)^{n}
	\langle \Delta(\alpha),\Delta(\beta)\rangle,
\]
where $n=\dim \Delta(\alpha)=\dim \Delta(\beta)$, and define
\begin{equation}
	m(\alpha\vert \beta)=\lim_{\alpha'\to 0}(2\alpha'\pi)^n
	m_{\alpha'}(\alpha\vert\beta)\:.  \label{eqvof2m}
\end{equation}
For example,
\[
m(1^+2^+,1^+2^-)
=\frac{1}{s_{01}}\left(
	\frac{1}{s_{012}}+\frac{1}{s_{01\tilde 2}}
\right).
\]
In the next section, we will use CHY formula to calculate $m(\alpha\vert\beta)$, which gives another representation.

At the end of this section, we briefly comment on the dual of $H_n(\mathcal M_{n+1}^{c}(\rr),\partial_{\mathcal I})$, which is the cohomology group on $\mathcal M_{n+1}^{c}(\rr)$ defined by a flat connection
\[
	\nabla=\dd + \omega_{\mathcal I}\wedge,
\]
with the property $\nabla^2=0$, where 
\[
	\omega_{\mathcal I}=\dd \log\mathcal{I}=
	\sum_{i=1}^n E_i \dd z_i=
	\sum_{i=1}^n\sum_{\tiny \substack{0\leq j\leq 2n+1\\ j\neq i}} \frac{s_{ij}}{z_i-z_{j}}\dd z_i
\]
is a well-defined single-valued form on $\mathcal M_{n+1}^{\mathrm{c}}(\rr)$. 
Let's denote these cohomology groups by 
$H^{\bullet}(\mathcal M_{n+1}^{\mathrm{c}}(\rr),\omega_{\mathcal I})$.
As shown in \cite{esnault1992cohomology}, $H^{n}(\mathcal M_{n+1}^{\mathrm{c}}(\rr),\omega_{\mathcal I})$ is generated
by $\dd \log$ forms
\[
	\dd \log \left( \frac{z_{i_1}\pm z_{j_1}}{z_{k_1}\pm z_{l_1}} \right)\wedge \cdots \wedge \dd \log \left( \frac{z_{i_n}\pm z_{j_n}}{z_{k_n}\pm z_{l_n}} \right),
\]
where $0\leq i_r,j_r,k_r,l_r\leq n$ for 
all $1\leq r\leq n$, which should be invariant under $z\mapsto 1/z$.
Parke-Taylor forms eq.\eqref{dlogPT} are in these forms.
Unlike the case of $\mathcal M_{0,n}(\mathbb R)$, the independent 
Parke-Taylor forms can \emph{no longer} generate the whole space
$H^n(\mathcal M_{n+1}^{\mathrm{c}}(\rr),\omega_\mathcal I)$. One has to consider more forms, 
for example,
\[
	\dd \log \left(\frac{z_1}{z_0}\right)\wedge 
	\dd \log \left(\frac{z_2}{z_0}\right)
	=\frac{\dd z_1\wedge \dd z_2}{z_1z_2}.
\]
It is invariant under the transformation $z\mapsto 1/z$,
and then belongs to $H^2(\mathcal M_{3}^{c}(\rr),\omega_{\mathcal I})$. However, it is not the canonical form of a cyclohedron anymore.
    \section{Kinematic Cyclohedra and CHY Formula} \label{sec4}
In \cite{Arkani-Hamed:2017mur}, Arkani-Hamed, Bai, He and Yan have constructed associahedra in kinematic space. They found a set of equations and inequalities
of $X$-variables
\[
    X_{ij}:=s_{i,i+1,\dots,j-1}=\sum_{i\leq k<l\leq j-1}s_{kl},
\]
where the interior of an associahedron is given by $\{X_{ij}>0\}$ and every facet represented by $X_{ij}$ is laying on the hyperplane defined by $X_{ij}=0$. They also found that the scattering equations can be interpreted as a map between worldsheet associahedra and kinematic associahedra.
Here, we generalize these stories to cyclohedra. 

\subsection{Kinematic Cyclohedra}

Suppose $P_n$ is a $(2n+2)$-gon, for every diagonal with vertices $i$ and $j$, we put a variable $X_{ij}$ on it, 
where labels are living in $\zz_{2n+2}$. Thanks to momentum conservation and our identification of 
$s_{ij}$ and $s_{\tilde\imath\tilde\jmath}$, $X_{ij}=X_{ji}=X_{\tilde \imath\tilde \jmath}$.
Usually, it's more convenient to use $X_{\tilde \imath\tilde \jmath}
=X_{ij}$ to represent a pair of centrally symmetric
diagonals $(i,j)$ and $(\tilde\imath,\tilde\jmath)$, where we 
think the longest diagonals $(i,\tilde \imath)=(\tilde \imath,i)$ as `degenerate' pairs. Therefore, there are $n(n+1)$
independent $X$-variables. 
The inverse map from $X$-variables to
$s$-variables is given by the following 
identity
\[
    -s_{ij}=X_{ij}+X_{i+1,j+1}-X_{i+1,j}-X_{i,j+1}.
\]
For future use, we generalize the above identity to
\begin{equation}\label{sXmap}
    -\sum_{i\leq a<j<k\leq b<l}s_{ab}=X_{ik}+X_{jl}-X_{jk}-X_{il}.
\end{equation}

In order to construct a $n$-dimensional object from $n^2+n$
independent variables, we need $n^2$ constraints, which can be chosen like the old story~\cite{Arkani-Hamed:2017mur}: let
\begin{equation}\label{constraints}
    c_{ij}=-s_{ij}=X_{i,j}+X_{i+1,j+1}-X_{i,j+1}-X_{i+1,j}
\end{equation}
be positive constants for $0\leq i<n$ and $i+1<j\leq 2n$, 
where we still require that $c_{i\tilde\jmath}$ and $c_{j\tilde\imath}$ are the same, so the rank of these constraints is
$n^2$.
It's very convenient to take into account equations of $c_{ij}$
for $n< i\leq 2n$ by identifying $c_{\tilde \imath\tilde \jmath}$ 
and $c_{ij}$. All in all, we require that $s_{ij}=-c_{ij}<0$ for 
\[
    0\leq i<j-1<j\leq 2n, \quad i\neq n \quad \text{and}\quad 
    j\neq \tilde{n}.
\]
Our polytope $\mathcal{W}_n$ is given by inequalities
\[
    X_{ij}\geq 0 \quad \text{for all $0\leq i<j\leq 2n+1$}\:,
\]
and setting eq.\eqref{constraints} to be positive constants. Such construction can also obtained from $C_{n}$ cluster algebra by using the method proposed in~\cite{bazier2018abhy}.
We will show that it is a cyclohedron after a short example.

For $n=2$, independent variables are
\(
    \{
        X_{02},X_{0\tilde 0},X_{0\tilde 1},X_{1\tilde 1},
        X_{1\tilde 2}, X_{2\tilde 2}
    \},
\)
and constraints are
\begin{align*}
    X_{02}+X_{0\tilde 1}-X_{0\tilde 0}&=c_{02},\\
    X_{0\tilde 0}+X_{1\tilde 1}-2X_{0\tilde 1}&=c_{0\tilde 0},\\
    X_{0\tilde 1}+X_{1\tilde 2}-X_{1\tilde 1}&=c_{0\tilde 1},\\
    X_{1\tilde 1}+X_{2\tilde 2}-2X_{1\tilde 2}&=c_{1\tilde 1},
\end{align*}
where the factor $2$ arise form identification and the equation for $c_{1\tilde 0}$ is the same as the equation for $c_{0\tilde 1}$. 
Thus, $\mathcal{W}_2$ is the polytope defined by inequalities
\[
    X_{02}\geq 0,\quad X_{0\tilde 0}\geq 0,
    \quad X_{0\tilde 1}\geq 0,\quad X_{1\tilde 1}\geq 0,
    \quad X_{1\tilde 2}\geq 0,\quad  X_{2\tilde 2}\geq 0.
\]
It's shown in figure \ref{w2} by using $X_{0\tilde{0}}$ and $X_{02}$ as the coordinates. Therefore, $\mathcal{W}_2$ is 
a hexagon, that is a 2 dimensional cyclohedron. Now let's consider the $\mathcal{W}_n$ for general $n$. 

\begin{figure}[htbp]
\begin{center}
    \begin{tikzpicture}[scale=1.5]
    \filldraw[fill=blue!15] (0,0) -- (0,1) -- (1,2) -- (3,3)
    -- (4,3) -- (4,0) -- (0,0);
    \draw[thick] (0,0) -- (0,1) -- (1,2) -- (3,3)
    -- (4,3) -- (4,0) -- (0,0);
    \draw[thick,->] (0,0) -- (5,0) node[right] {$X_{0\tilde 0}$};
    \draw[thick,->] (0,0) -- (0,3.5) node[above] {$X_{02}$};
    \draw (2,0) node[below] {$X_{02}$};
    \draw (2,2.6) node[above] {$X_{1\tilde 1}$};
    \draw (3.5,3) node[above] {$X_{1\tilde 2}$};
    \draw (4,1.5) node[right] {$X_{2\tilde 2}$};
    \draw (0.4,1.5) node[above] {$X_{0\tilde 1}$};
    \draw (0,0.5) node[left] {$X_{0\tilde 0}$};
\end{tikzpicture}
\end{center}
\caption{The $2$-dimensional kinematic cyclohedron $\mathcal{W}_2$ after
taking $c_{02}=c_{0\tilde 0}=c_{0\tilde 1}=c_{1\tilde 1}=1$.}
\label{w2}
\end{figure}
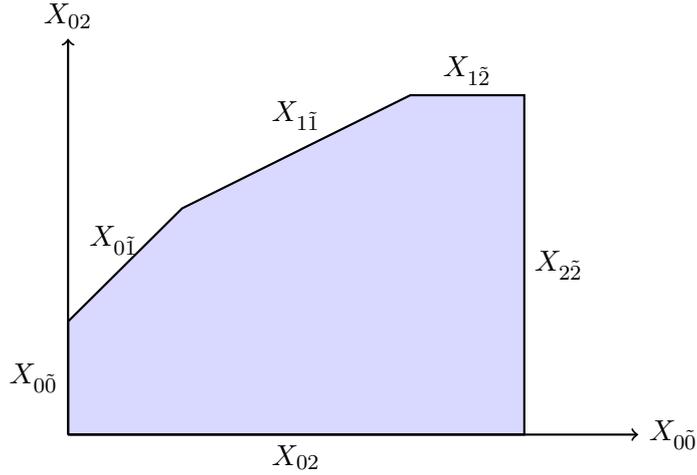

\begin{pro}
    $\mathcal{W}_n$ is the $n$-dimensional cyclohedron.
\end{pro}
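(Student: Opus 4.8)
The plan is to follow the strategy of Arkani-Hamed, Bai, He and Yan for the associahedron~\cite{Arkani-Hamed:2017mur}, adapted to the central symmetry. I would first set up a dictionary between the boundary of $\mathcal{W}_n$ and the poset $\operatorname{Cyc}(n)$ of Definition~\ref{def-1}: each facet $\{X_{ij}=0\}$ is to correspond to the centrally-symmetric pair of diagonals $\{(i,j),(\tilde\imath,\tilde\jmath)\}$ of the $(2n+2)$-gon (with the long diagonals $(i,\tilde\imath)$ counted as degenerate self-symmetric pairs), and a face of codimension $k$ is to correspond to a centrally-symmetric partial dissection by $k$ such pairs. The $n^2$ constraints eq.\eqref{constraints} cut the $n(n+1)$ independent $X$-variables down to an $n$-dimensional affine subspace in which $\mathcal{W}_n$ lives, and the number of facets already matches the $n(n+1)$ facets of $W_n$. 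The goal is then to show that the face lattice of $\mathcal{W}_n$ is isomorphic to $\operatorname{Cyc}(n)$.

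The key step is the \emph{non-crossing lemma}: two facets $\{X_{ik}=0\}$ and $\{X_{jl}=0\}$ meet in $\mathcal{W}_n$ if and only if the diagonals $(i,k)$ and $(j,l)$ do not cross. For the forward direction I would use the generalized identity eq.\eqref{sXmap}: for crossing diagonals $i<j<k<l$ it gives $X_{ik}+X_{jl}=X_{jk}+X_{il}+\sum_{a,b}c_{ab}$ with all $c_{ab}>0$, so on $\mathcal{W}_n$ one has $X_{ik}+X_{jl}\geq \sum_{a,b}c_{ab}>0$ and the two variables cannot vanish simultaneously. Conversely, for a non-crossing collection I would exhibit a point of $\mathcal{W}_n$ on which exactly the prescribed $X$'s vanish while all others stay strictly positive, so that compatible centrally-symmetric dissections index genuine faces; this identifies the complex of vanishing loci with the complex of centrally-symmetric dissections.

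Next I would verify that each hyperplane $\{X_{ij}=0\}$ actually supports a facet of codimension one and that $\mathcal{W}_n$ is bounded, so that it is a genuine $n$-dimensional polytope; boundedness follows since every $X_{ij}$ is bounded above in terms of the fixed constants $c_{ab}$ through eq.\eqref{sXmap}. To pin down the full combinatorial type I would invoke the geometric factorization: a centrally-symmetric pair $\{(i,j),(\tilde\imath,\tilde\jmath)\}$ cuts the $(2n+2)$-gon into a central centrally-symmetric sub-polygon and two symmetric caps, and I would show the corresponding facet splits as $\mathcal{W}_{n-i}\times A_{i-1}$, where the first factor is again a polytope of the present form and the second is an ABHY associahedron, which I take as known from~\cite{Arkani-Hamed:2017mur}. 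An induction on $n$, with the hexagon $\mathcal{W}_2$ of figure~\ref{w2} as base case, then matches $\mathcal{W}_n$ with $W_n$.

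The hardest part will be the converse direction of the non-crossing lemma together with the factorization: showing that every maximal compatible collection, i.e.\ a centrally-symmetric triangulation, really is a vertex and that no spurious faces or unbounded directions appear, so that the face poset is \emph{exactly} $\operatorname{Cyc}(n)$ rather than merely containing it. Controlling this simultaneity, ensuring the prescribed $X$'s can vanish while all remaining ones stay strictly positive, is where the identifications $X_{ij}=X_{\tilde\imath\tilde\jmath}$ and the special status of the long diagonals make the bookkeeping delicate.
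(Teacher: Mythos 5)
Your ``non-crossing lemma,'' forward direction, \emph{is} the paper's entire proof: the paper uses eq.\eqref{sXmap} together with positivity of the $c_{ab}$ to show that the $X$-variables of two crossing diagonals cannot vanish simultaneously on $\mathcal{W}_n$ (if $X_{ik},X_{jl}\to 0$ then $X_{jk}+X_{il}$ would be forced negative), observes that the identification $X_{ij}=X_{\tilde\imath\tilde\jmath}$ makes every dissection so obtained centrally symmetric, and concludes at once from Definition~\ref{def-1}. Everything else in your plan --- exhibiting a point of $\mathcal{W}_n$ realizing each non-crossing centrally symmetric collection, checking that every hyperplane $\{X_{ij}=0\}$ supports a genuine codimension-one facet, boundedness, and the inductive identification of the face lattice via the factorization $H\cong \mathcal{W}_{n-i}\times A_{i-1}$ --- is work the paper simply does not do: it treats the converse as the contrapositive of the crossing obstruction and stops there. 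So your approach subsumes the paper's; where the paper argues, you argue identically, and the steps you flag as ``hardest'' (realizability of faces, absence of spurious faces or unbounded directions, exactness rather than mere containment of the poset isomorphism) are precisely the ones the paper leaves implicit. Be aware, though, that your proposal also leaves those steps as stated intentions rather than executed arguments, so as written it is no more complete than the published proof --- it is a more honest accounting of the same sketch. If you do carry out the explicit point construction (in ABHY-type constructions this is typically done by solving the constraints eq.\eqref{constraints} explicitly on the candidate face, with care for the degenerate long diagonals $(i,\tilde\imath)$) and the facet factorization, you will have strictly strengthened the paper's argument.
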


\begin{proof}
In our construction, every facet of the polytope $\mathcal{W}_n$ is given by setting some
particular $X_{ij}\to 0$. However, it's impossible to set arbitrary two $X$-variables approaching to $0$ together. Indeed, recall eq.\eqref{sXmap}, for $j\leq n$ and $l\leq \tilde n$,
\[
    X_{jk}+X_{il}=X_{ik}+X_{jl}-\sum_{i\leq a<j<k\leq b<l}c_{ab},
\]
if $X_{ik}$, $X_{jl}\to 0$, the right-hand side of above equation will
be negative, but the left-hand side will never be negative, so there is a contradiction. 
It follows that we cannot reach a lower dimensional boundary of $\mathcal{W}_n$ by setting two intersecting diagonals to zero. 
Conversely, lower dimensional boundaries of $\mathcal{W}_n$ can only be reached by setting non-intersecting diagonals to zero. 
On the other hand,
the dissections of $(2n+2)$-gon by setting non-intersecting diagonals to zero are always centrally symmetric since $X_{ij}=X_{\tilde{\imath}\tilde{\jmath}}$. Therefore, $\mathcal{W}_n$ is a $n$-dimensional cyclohedron according to Definition \ref{def-1}.
\end{proof}

It's very natural to construct other cyclohedra $\mathcal{W}(\sigma,\{\mathsf{s}_{i}\})$ in kinematic space. The data $\{\mathsf{s}_1,\mathsf{s}_2,\ldots,\mathsf{s}_n\}$ and permutation $\sigma$ define an centrally symmetric ordering in $S^{2n+2}$, and vice versa. For example,
$(2^+1^-3^+) \leftrightarrow (02\tilde 13\tilde 0\tilde 2 1\tilde 3)$.
Therefore, we identify these two notations of centrally symmetric orderings. Now, we can defined $X$-variables in ordering $\alpha$ by setting
\[
	X_{\alpha(i),\alpha(j)}=s_{\alpha(i),\alpha(i+1),\dots,\alpha(j-1)}.
\]
Then $\mathcal W_n(\alpha)$ in kinematic space is similarly given by setting 
\begin{equation}
    H_n(\alpha)=\{c_{\alpha(i)\alpha(j)} =\text{positive constant}\:\vert\: 0\leq i<j-1<j\leq 2n,\:i\neq n \text{ and }j\neq \tilde{n} \}
\end{equation}
and
\begin{equation}
    X_{\alpha(i),\alpha(j)}\geq 0 \quad \text{for all}\: i<j. 
\end{equation}

Cyclohedra are simple polytopes, i.e. every vertex of $n$-dimensional cyclohedron $W_n$ is adjacent to $n$ facets. It's direct from the 
Definition \ref{def-1} of cyclohedron. Since every centrally symmetric triangulation of a $(2n+2)$-gon, which
corresponds to a vertex of $W_n$, has $n$ pairs of centrally symmetric diagonals, which correspond to the
adjacent facets of this vertex. 
Therefore the canonical form \cite{Arkani-Hamed:2017tmz} of 
$\mathcal{W}_n$ is 
\begin{equation}\label{scattering_form}
    \Omega(\mathcal{W}_n)=\sum_{\text{vertex $Z$}} \operatorname{sign}(Z)\bigwedge_{a=1}^n
    \dd\log X_{i^Z_a,j^Z_a},
\end{equation}
where vertex $Z$ corresponds to a centrally symmetric triangulation of this $(2n+2)$-gon, $X_{i^Z_a,j^Z_a}$ is the $X$-variable which corresponds to the pair of diagonals $(i^Z_a,j^Z_a)$ and $(\tilde\imath^Z_a,\tilde\jmath^Z_a)$ in this triangulation, and $\operatorname{sign}(Z)$ is evaluated on
the ordering of the facets in the wedge product
to ensure $\Omega(\mathcal{W}_n)$ is a projective form \cite{Arkani-Hamed:2017tmz,Arkani-Hamed:2017mur}, for example,
\begin{align*}
    \Omega(\mathcal{W}_1) &= \frac{\dd X_{0\tilde{0}}}{X_{0\tilde{0}}}- \frac{\dd X_{1\tilde{1}}}{X_{1\tilde{1}}} \:,  \\
    \Omega(\mathcal{W}_2) &= \frac{\dd X_{02}\wedge\dd X_{0\tilde{0}}}{X_{02}X_{0\tilde{0}}} -\frac{\dd X_{02}\wedge\dd X_{2\tilde{2}}}{X_{02}X_{2\tilde{2}}} +\frac{\dd X_{1\tilde{2}}\wedge\dd X_{2\tilde{2}}}{X_{1\tilde{2}}X_{2\tilde{2}}} \\
    &\quad- \frac{\dd X_{1\tilde{2}}\wedge\dd X_{1\tilde{1}}}{X_{1\tilde{2}}X_{1\tilde{1}}}+ \frac{\dd X_{0\tilde{1}}\wedge\dd X_{1\tilde{1}}}{X_{0\tilde{1}}X_{1\tilde{1}}} -\frac{\dd X_{0\tilde{1}}\wedge\dd X_{0\tilde{0}}}{X_{0\tilde{1}}X_{0\tilde{0}}} \:. 
\end{align*}

On the other aspect, every vertex of $\mathcal{W}_{n}$ also corresponds to the dual graph of each centrally symmetric triangulation of a $(2n{+}2)$-gon, see figure \ref{w3}. 
\begin{figure}[htbp]
    \begin{center}
        \begin{tikzpicture}[baseline={([yshift=-.5ex]current bounding box.center)}]
            \node[draw,thick,minimum size=3cm,regular polygon,regular polygon sides=6,gray] (a) {};
            \coordinate (p0) at (a.corner 1);
            \coordinate (pt2) at (a.corner 2);
            \coordinate (pt1) at (a.corner 3);
            \coordinate (pt0) at (a.corner 4);
            \coordinate (p2) at (a.corner 5);
            \coordinate (p1) at (a.corner 6);
            \coordinate (t1) at ($(p0)!0.5!(pt1)!0.25!(pt2)$);
            \coordinate (t2) at ($(pt0)!0.5!(p1)!0.25!(p2)$);
            \coordinate (t3) at ($(pt0)!0.5!(pt1)!0.5!(0,0)$);
            \coordinate (t4) at ($(p0)!0.5!(p1)!0.5!(0,0)$);
        
            \draw[gray] (p0) node[above] {$0$};
            \draw[gray] (pt2) node[above] {$\tilde 2$};
            \draw[gray] (pt1) node[left] {$\tilde 1$};
            \draw[gray] (pt0) node[left] {$\tilde 0$};
            \draw[gray] (p2) node[right] {$2$};
            \draw[gray] (p1) node[right] {$1$};
            \draw[thick,gray] (p0) -- (pt1);
            \draw[thick,gray] (p1) -- (pt0);
            \draw[thick,gray] (p0) -- (pt0);
            \draw[thick,darkred] (30:-2) node[left] {$\tilde{0}$}
            -- (t3) -- (t4);
            \draw[thick,darkred] (30:2) node[right] {$0$}
            -- (t4);
            \draw[thick,darkred] (5*30:2) node[left] {$\tilde{1}$}
            -- (t1) -- (t3);
            \draw[thick,darkred] (5*30:-2) node[right] {$1$}
            -- (t2) -- (t4);
            \draw[thick,darkred] (3*30:2) node[above] {$\tilde 2$} -- (t1);
            \draw[thick,darkred] (3*30:-2) node[below] {$2$} -- (t2);
        \end{tikzpicture}
        \qquad \qquad
        \begin{tikzpicture}[baseline={([yshift=-.5ex]current bounding box.center)}]
            \node[draw,thick,minimum size=3cm,regular polygon,regular polygon sides=6] (a) {};
            \filldraw[gray] (0,0) circle [radius=1pt];
            \draw (a.corner 1) node[above] {$0$};
            \draw (a.corner 2) node[above] {$\tilde 1$};
            \draw (a.corner 3) node[left] {$\tilde 2$};
            \draw (a.corner 4) node[left] {$\tilde 0$};
            \draw (a.corner 5) node[right] {$1$};
            \draw (a.corner 6) node[right] {$2$};
            \draw[thick] (a.corner 2) -- (a.corner 6);
            \draw[thick] (a.corner 3) -- (a.corner 6);
            \draw[thick] (a.corner 3) -- (a.corner 5);
        \end{tikzpicture}
    \end{center}
\caption{The dual graph of a triangulation (left) and a triangulation in the ordering $(021\tilde 0\tilde 2\tilde 1)$ (right).}
\label{w3}
\end{figure}
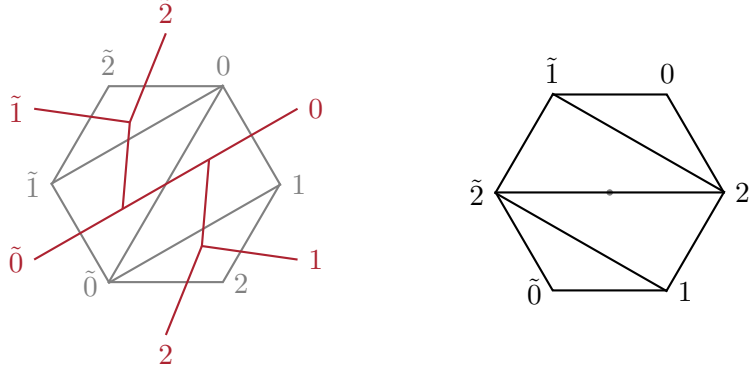
If we further think dual graphs as planar Feynman diagrams with scalar
propagators, they share the same poles with the term corresponding to the vertex $Z$ 
in eq.\eqref{scattering_form}
\[
    \bigwedge_{a=1}^n
    \dd\log X_{i^Z_a,j^Z_a}=\frac{\bigwedge_{a=1}^n\dd X_{i^Z_a,j^Z_a}}{\prod_{a=1}^n X_{i^Z_a,j^Z_a}}.
\]
For example, for the triangulation in the left-hand side of figure \ref{w3}, poles of this planar Feynman diagram are $s_{12}=s_{\tilde 1\tilde 2}$
and $s_{012}=s_{\tilde 0\tilde 1\tilde 2}$, and the corresponding scattering form is
\[
    \frac{\dd X_{02}\wedge \dd X_{03}}{X_{02}X_{03}}=\frac{\dd s_{01}\wedge\dd s_{012}}{s_{01}s_{012}}.
\]
Note that, $1/(s_{01}s_{012})$ is not the amplitude determined by this Feynman diagram.
Besides, we can even read $\operatorname{sign}(Z)$ from this viewpoint. Indeed, if $Z$ and $Z'$ are full centrally symmetric triangulations related by a mutation of a pair of diagonals, then
$\operatorname{sign}(Z)=-\operatorname{sign}(Z')$.

We can extend this viewpoint of dual graph to other orderings. For an ordering $\alpha$, define
$\mathsf{csp}(\alpha)$ to be the set of all dual graphs of full centrally symmetric triangulations in
the ordering $\alpha$ (see the right-hand side of figure.\ref{w3} for an example). 
If a planar graph $g$ belongs to $\mathsf{csp}(\alpha)$, we say that this graph is compatible with
$\alpha$.
Therefore, 
\[
    \Omega(\mathcal{W}_n(\alpha))=\sum_{g\in \mathsf{csp}(\alpha)} \operatorname{sign}(g)\bigwedge_{a=1}^n
    \dd\log X_{i^g_a,j^g_a}.
\]
Now for any graph $g$, we can pullback it to subspace $H(\alpha)$,
then \cite{Arkani-Hamed:2017mur}
\begin{equation}\label{17}
    \left.\left(\operatorname{sign}(g)\bigwedge_{a=1}^n
    \dd X_{i^g_a,j^g_a}\right)\right|_{H(\alpha)}=
    \begin{cases}
        \dd^n X(\alpha) & \text{if $g\in \mathsf{csp}(\alpha)$},\\
        0&\text{otherwise},
    \end{cases}
\end{equation}
where we assume that $\operatorname{sign}(g)$ for a compatible graph in different
ordering are equal. One can check that $\dd^n X(\alpha)$ is independent of choice of $g$, so it's the volume form of $\mathcal W_n(\alpha)$.

Now we define $m(\alpha\vert\beta)$ by the pullback of $\Omega(\mathcal{W}_n(\alpha))$
at the space $H_n(\beta)$:
\[
    \Omega(\mathcal{W}_n(\alpha))|_{H_n(\beta)} =: m(\alpha\vert \beta) \,\dd^n X(\beta),
\]
or more concretely by eq.\eqref{17},
\begin{equation} \label{malphabeta}
    m(\alpha\vert \beta)=\sum_{g\in \mathsf{csp}(\alpha)\cap \mathsf{csp}(\beta)}
    \prod_{a=1}^n\frac{1}{X_{i^g_a,j^g_a}} \:.
\end{equation}
    
To this point, one can easily verify eq.(\ref{eqvof2m}) by checking the combinatoric structure of both sides.

\subsection{Scattering Equations as a Map Between Cyclohedra}
Up to now, we have constructed cyclohedra in 
moduli space and kinematic space. Here we claim
that the connection between these two cyclohedra
is given by the so-called \textit{scattering equation map} \cite{Arkani-Hamed:2017mur} which is 
the solution of $s_{ij}$ of the scattering equations eq.\eqref{scatteringequation}.

Recall that, in our construction of kinematic cyclohedron, 
\[
    s_{01},s_{12},\dots,s_{n-1,n},s_{0\tilde n},s_{1\tilde n},\dots,s_{n\tilde n}
\]
are not constants. We can eliminate $s_{i\tilde n}$ by the conservation of momentum, 
then scattering equations eq.\eqref{scatteringequation} are linear equations
of $\{s_{i,i+1}\,\vert \,0\leq i\leq n-1\}$, whose solution is 
\begin{equation}\label{sm}
    s_{k,k+1}=\sum_{\substack{0\leq i\leq k<k+1\leq j\leq 2n\\ j\neq i+1}}\frac{\sin \left(\theta_{i\tilde n}/2\right) \sin \left(\theta_{j\tilde n}/2\right) \sin \left(\theta_{k,k+1}/2\right)}{\sin \left(\theta_{ij}/2\right) \sin \left(\theta_{k\tilde n}/2\right) \sin \left(\theta_{k+1,\tilde n}/2\right)}c_{ij}
\end{equation}
for $k=0,\dots,n-1$. Eq.(\ref{sm}) defines a map $\varphi:z\mapsto X$ from moduli space to kinematic space. 
It's direct from eq.(\ref{sm}) that $X_{k,k+2}=s_{k,k+1}>0$ when 
$0<\theta_1<\cdots<\theta_n<\pi$. 
The other $X_{ij}$'s can also be computed by the
definition of $X$-variables and one can check 
that all these $X$-variables are positive.
Therefore, this map further maps the worldsheet cyclohedra $W_{n}(\operatorname{id},\{+\})$ into the kinematic cyclohedra $\mathcal{W}_{n}$.

What's more, the scattering equation map is \emph{a map between the interiors of these two cyclohedra}. %
It's equivalent to say that $\varphi(z)$ is in the boundary of $\mathcal{W}_n$ if and only if 
$z$ is in the boundary of $W_n(\operatorname{id},\{+\})$. %
This is straightforward from the scattering equation map eq.\eqref{sm} or scattering equations eq.\eqref{scatteringequation}. 
For example, suppose $\theta_p=\theta_r+tx_p$ for $r<p\leq s$,
when $t$ goes to $0$, 
\[
    s_{k,k+1}\to \sum_{i=r}^{k-1} 
    \frac{x_{k+1}-x_k}{x_{k+1}-x_i}c_{i,k+1}+\sum_{i=r}^{k} \sum _{j=k+2}^{s} \frac{x_{k+1}-x_k}{x_j-x_i}c_{ij},
\]
for $r\leq k<s$. Therefore,
\[
    \sum_{k=r}^{s-1}s_{k,k+1}\to \sum_{i=r}^s\sum_{j=i+2}^sc_{ij},
\]
and then
\[
   X_{r,s+1}=s_{r,\dots,s}=\sum_{k=r}^{s-1}s_{k,k+1}-\sum_{i=r}^s\sum_{j=i+2}^sc_{ij}\to 0.
\]
Conversely, one can use eq.\eqref{sXmap} to rewrite eq.\eqref{scatteringequation} in $X$ and 
$z$-variables, then when $X_{ij}$ goes to zero, the corresponding $z_k-z_i$ goes to 
zero for $i<k<j$. Actually, it does not depend on the condition $c_{ij}>0$ at all. We conjecture that 
$\varphi$ is a one-to-one map after imposing these conditions.

The pushforward \cite{Arkani-Hamed:2017mur}
of scattering equation map 
connects the canonical forms of these two cyclohedra, i.e. 
\[
    \sum_{\text{sol. }z}\mathsf{PT}_n(1^+\cdots n^+)=\Omega(\mathcal{W}_n(\operatorname{id}))|_{H_n(\operatorname{id})}
    =m(\operatorname{id}\vert\operatorname{id})\,\dd^n X(\operatorname{id}),
\]
where $z$ are the solution of scattering equations. One can consider 
other ordering pairs $\alpha$, $\beta$ and the scattering equation map $\varphi^\alpha$, then
\[
    \sum_{\text{sol. }z}\mathsf{PT}_n(\beta)=\Omega(\mathcal{W}_n(\operatorname{\alpha}))|_{H_n(\beta)}
    =m(\alpha\vert \beta)\,\dd^n X(\beta).
\]
We can rewrite pushforward in delta function form: 
\[
\begin{aligned}
    m(\alpha\vert\beta)&=\int \mathsf{PT}_n(\beta)\prod_{a=1}^n \delta(X_{\alpha(i_a),\alpha(j_a)}-\varphi^\alpha_a(z))\\
    &=\int \mathsf{PT}_n(\beta)\mathsf{PT}_n(\alpha)\prod_{a=1}^n \delta(E_a)\\
    &=: m_{\text{CHY}}(\alpha\vert \beta).
\end{aligned}
\]
    \section{$Z$-integrals on the Moduli Space \texorpdfstring{$\mathcal{M}^{\mathrm{c}}_{n+1}(\mathbb{R})$}{M(R)}} \label{sec5}

It is the place to consider some basic prototypes of string-like integrals on such moduli space since the Parke-Taylor factors and Koba-Nielsen factors on this space have been defined.  
In analogy with the story in $\mathcal{M}_{0,n}$, it's natural to define the so-called `$Z$-integrals' \cite{Mafra:2016mcc} on $\mathcal{M}_{n+1}^{\mathrm{c}}(\mathbb{R})$ as 
\begin{equation}
    Z_{\alpha}(\beta)=\int_{\mathfrak{C}(\alpha)} \mathsf{PT}(\beta) \prod_{0\leq i<j\leq n}
    \left(-\frac{(z_{\alpha(i)}-z_{\alpha(j)})^2}{4z_{\alpha(i)}z_{\alpha(j)}}\right)^{\alpha^{\prime}s_{ij}}\left(\frac{(z_{\alpha(i)}+z_{\alpha(j)})^2}{4z_{\alpha(i)}z_{\alpha(j)}}\right)^{\alpha^{\prime}s_{i\tilde \jmath}}   \label{Zintegral}
\end{equation}
where $\mathsf{PT}(\beta)$ is the Parke-Taylor factor defined by eq.(\ref{PTfactor}) with respect to the ordering $\beta$, $\mathfrak{C}(\alpha)$ denotes the integration ordering on a circle which is defined by $\theta_{\alpha(0)}< \theta_{\alpha(1)}<\cdots<\theta_{\alpha(n)}<\theta_{\alpha(\tilde{0})}$ (recall that $\theta_{i}=\arg(z_{i})$), and the form of Koba-Nielsen factor is directly derived form eq.(\ref{thetaKN}) by a change of variables and hence ensure its reality. Obviously, these integral satisfy KK-like relations
\begin{equation}\label{ZintKK}
    \begin{split}
        Z_{\alpha}(1^+2^{\mathsf s_2}\dots n^{\mathsf s_n})
        +&Z_{\alpha}(2^{\mathsf s_2}1^+\dots n^{\mathsf s_n})
        +\cdots +
        Z_{\alpha}(2^{\mathsf s_2}\dots n^{\mathsf s_n}1^+)+
        \\
        &Z_{\alpha}(1^-2^{\mathsf s_2}\dots n^{\mathsf s_n})
        +Z_{\alpha}(2^{\mathsf s_2}1^-\dots n^{\mathsf s_n})
        +\cdots +
        Z_{\alpha}(2^{\mathsf s_2}\dots n^{\mathsf s_n}1^-)=0
    \end{split}
\end{equation} In what follows, we will limit ourselves to the case of $\mathfrak{C}(012\cdots n)$ since all the other cases just differ by a relabelling of indices, and the subscript $\alpha$ on $Z$ will also be omitted since the integration orderings have been fixed. Further more, we fix $z_{0}$ as $0$ (then $\theta_{0}=0$ and $\theta_{\tilde{0}}=\pi$) to mod out the $U(1)$ redundancy.

In this section, we will take $\alpha^{\prime}$-expansion for two simple examples of integrals \eqref{Zintegral} to give a taste of number theoretical properties. We will see, not only multiple zeta values, but also \emph{alternating Euler sums} appear in the $\alpha^{\prime}$-expansion. Before proceeding, it's worth mentioning an important limit of Koba-Nielson factor \eqref{KNx}  
\begin{align}
       \mathcal{I}^{\prime} &=  \prod_{0\leq i<j\leq n} \Biggl(\frac{(x_{i}-x_{j})^{2}}{(1+x_{i}^{2})(1+x_{j}^{2})}\Biggr)^{s_{ij}}
       = \prod_{0\leq i<j\leq n} (x_{i}-x_{j})^{2\alpha^{\prime }s_{ij}} \prod_{i=0}^{n}\lvert x_{i}-\ii\rvert^{2\alpha^{\prime}s_{i\tilde{\imath}}} \label{KNoc}
\end{align}
which is obtained by taking all $s_{i\tilde{\jmath}}$ with $i\neq j$  to be vanish, where $s_{i\tilde{\imath}}$ now is ${-}\sum_{j=0}^{n}s_{i j}$. Note that, eq.(\ref{KNoc}) is proportional to the Koba-Nielsen factor for $n{+}1$ open strings and one closed string~\cite{Stieberger:2009hq}
\begin{equation}
    \mathcal{I}_{n+1,1}=\lvert z-\bar{z}\rvert^{\alpha^{\prime}q_{\parallel}^{2}}\prod_{i<j}\lvert x_{i}-x_{j}\rvert^{2\alpha^{\prime}p_{i}\cdot p_{j}}\prod_{i}\lvert x_{i}-z\rvert^{2\alpha^{\prime}p_{i}\cdot q}
\end{equation}
under the gauge fixing $z=\ii$. However, some $Z$-integrals we defined don't have a good behaviour under this limit, and a comprehensive treatment of such integrals lie somewhat out the main line of this paper, we leave it to the future work.

Let us first consider the simplest cases, $n=1$. In this case, there is actually only one kind of Parke-Taylor factor since $\mathsf{PT}(1^{+})=-\mathsf{PT}(1^{-})$, and the $Z$-integral is simply
\begin{align}
     Z(1^{+})&=\int_{0}^{\infty} \frac{\dd x_{1}}{-2x_{1}} \biggl(\frac{x_{1}^{2}}{1+x_{1}^{2}}\biggr)^{\alpha^{\prime}s_{01}}
        \biggl(\frac{1}{1+x_{1}^{2}}\biggr)^{\alpha^{\prime}s_{0\tilde{1}}} \nonumber \\
         &=-\frac{\Gamma(\alpha^{\prime}s_{01})\Gamma(\alpha^{\prime}s_{0\tilde{1}})}{4\Gamma(\alpha^{\prime}s_{01}+\alpha^{\prime}s_{0\tilde{1}})} \:. \label{intn=1} 
\end{align}
Here the $x$-parameterization is used. The result is the usual Beta function, hence the $\alpha^{\prime}$-expansion can be trivially obtained. The reader can check this result is divergent under the limit $s_{0\tilde{1}}\to 0$.

The first nontrivial case is $n=2$. In this case, the result of integral can no longer be expressed as the special functions we are familiar with, like the usual hypergeometric function appearing in string tree amplitudes. Thus, we turn to seek its $\alpha^{\prime}$-expansion.  As pointed out in the paper \cite{Broedel:2013tta}, the obstacle to this expansion is singularities in Mandelstam variables pole. Putting it manifestly, the leading order of the integral (\ref{Zintegral}), which can be obtained as a CHY formula
\begin{equation}
    Z_{\alpha}(\beta)\sim (\alpha^{\prime})^{-n}m(\alpha\vert\beta)+\cdots \:, \label{leadofint}
\end{equation}     
can not appear in a Taylor expansion in $\alpha^{\prime}$. An approach to this problem is pole subtractions introduced in \cite{Schlotterer:2018zce}. The basic idea of this method is to subtract the integrand by several terms which can be integrated out easier and contain the pole contributions. There are four $Z$-integrals for $n=2$ which are related by 
\begin{equation}
    Z(1^{+}2^{+})=-Z(2^{+}1^{+})-Z(1^{-}2^{+})-Z(2^{+}1^{-})\:,
\end{equation}   
and the pole subtractions of individual terms on R.H.S. are easier than $Z(1^{+}2^{+})$, which can be seen from leading contributions of these integrals. Further more, $Z(1^{-}2^{+})$ and $Z(2^{+}1^{-})$ can be obtained from $Z(2^{+}1^{+})$ through shifting particle index by $-1$ and $+1$, respectively. Thus, the remaining task is to attack the integral $Z(2^{+}1^{+})$.

We omit the detail of pole subtractions since this process is technical and tedious. The $\alpha^{\prime}$-expansion of remainder term after pole subtractions can be obtained by using the Maple program HyperInt~\cite{Panzer:2014caa} with an appropriate choice of the integration ordering. The result is simply
\begin{align}
    -Z(2^{+}1^{+})&=\frac{1}{4 \alpha^{\prime 2}}\biggl(\frac{1}{ s_{12} s_{012}}+\frac{1}{s_{12}s_{\tilde{0}12}}\biggr) 
   +\frac{\zeta_{2}}{4}\left(4-\frac{4 s_{\tilde{0}2}}{s_{\tilde{0}12}}-\frac{s_{\tilde{0}12}+s_{012}}{s_{12}}-\frac{4 s_{01}}{s_{012}}\right) \nonumber \\
    &\quad + \frac{\alpha^{\prime}\zeta_{3}}{2}\Biggl(\frac{(s_{\tilde{0}12}+s_{012})^2}{2 s_{12}}+\frac{4 s_{01} (s_{12}+s_{01})}{s_{012}}+\frac{4 s_{\tilde{0}2} (s_{12}+s_{\tilde{0}2})}{s_{\tilde{0}12}} \nonumber \\
    &\qquad\qquad -2 (2 s_{12}+s_{\tilde{0}2}+s_{01}+s_{01\tilde{2}})\Biggr) +O(\alpha^{\prime 2}) \:. \label{a'expansion}
\end{align} 
The next and higher orders of $\alpha^{\prime}$ are too long to record here, thus we just report the numbers appearing in the $\alpha^{\prime}$-expansion: 
\begin{align*}
    &\alpha^{\prime2}:\qquad \zeta_{2}^{2}\:, \\
    &\alpha^{\prime3}:\qquad \zeta_{2}\zeta_{3},\:\zeta_{5} \\
    &\alpha^{\prime4}:\qquad \zeta_{1,-3}\zeta_{2},\zeta_{2}^{3},\zeta_{3}^{2}
\end{align*}
where the convention for MZVs is
\[
    \zeta_{n_{1},\ldots,n_{r}}= \operatorname{Li}_{\lvert n_{1}\rvert,\ldots,\lvert n_{r}\rvert}\biggl(\frac{n_{1}}{\lvert n_{1}\rvert},\ldots ,\frac{n_{r}}{\lvert n_{r}\rvert}\biggr)
\]
with the usual Goncharov's polylogarithms $\operatorname{Li}_{n_{1},\ldots,n_{r}}(z_{1},\ldots,z_{r})$ of weight $n=n_{1}+\cdots+ n_{r}$~\cite{Goncharov:1998kja}.

Here we briefly remark the $\alpha^{\prime}$-expansion of $Z(2^{+}1^{+})$ on the following points:
\begin{enumerate}[(i)]
    \item We do this integral beginning with $x$-parameterization. Then the result is not as simple as in eq.(\ref{a'expansion}). Where the numerical coefficients involving the letters $\{\pm\ii\}$ appear. (`Letters' mean the possible numbers appearing in iterated integral which can be translated into the input of polylogarithms.)
    \item However, a huge simplification of the result is realized by a series of changes of integration variables, $\{y_{1}=x_{1}^{2},y_{2}=x_{1}x_{2}\}$ and $\{t_{i}=y_{i}/(1+y_{i})\vert i=1,2\}$.
    In terms of the variables $\{t_{i}\}$, the Koba-Nielsen factor can be written as 
    \[
    \mathcal{I}=t_{1}^{\alpha^{\prime}s_{01\tilde{2}}}(1-t_{1})^{\alpha^{\prime}s_{0\tilde{1}2}}\, t_{2}^{2\alpha^{\prime}s_{02}}(1-t_{2})^{\alpha^{\prime}s_{0\tilde{2}}}(t_{1}-2t_{1}t_{2}+t_{2}^{2})^{\alpha^{\prime}s_{2\tilde{2}}}(t_{2}-t_{1})^{2\alpha^{\prime}s_{12}}\:.   
    \]
    In its present form, the linear reducibility of $\alpha^{\prime}$-expansion of the Koba-Nielsen factor is manifest, then a similar analysis as in \cite{Brown:2009qja} will tell us that only the letters $\{-1,0,1\}$ can appear, namely, the numbers appearing in the $\alpha^{\prime}$-expansion can only be the usual multiple zeta values and alternating Euler sums, like $\log 2$ and $\zeta_{1,-3}$ and so on. Note that, it is the introduction of $y$-variables such that the linear reducibility better.
   \item There is no contribution of order $\alpha^{\prime-1}$ although the number $\log 2=\zeta_{-1}$ of weight 1 does exist. 
\end{enumerate}  

    \section{Outlook}

In this article, we study a new moduli space $\mathcal{M}_{n+1}^{\mathrm{c}}$ and the corresponding kinematic space via two mathematic tools, intersection theory and positive geometry. We have shown that $\mathcal{M}_{n+1}^{\mathrm{c}}(\mathbb{R})$ is tiled by $2^{n-1}n!$ copies of the  cyclohedron $W_{n}$, and each of which corresponds to a stationary point of the system with $n{+}1$ pairs of particles on a circle. We construct the `inverse KLT matrix' for this case by considering intersection numbers of twisted cycles. At the same time, we construct another cyclohedron $\mathcal{W}_{n}$ in kinematic space
and prove the scattering equations provide a map from $W_{n}$ to $\mathcal{W}_{n}$. It seems that everything works well as in the old story $\mathcal{M}_{0,n}$. However, there are several problems we have not taken up yet, for instance, how to triangulate the kinematic cyclohedra themselves as in \cite{He:2018svj}, from which we expect to obtain a recursion relation for $m(\alpha\vert\beta)$. 

Meanwhile, there are two aspects are obviously different from the case for $\mathcal{M}_{0,n}$. One is the mismatching between the number of independent Parke-Taylor factors, which is conjectured to be $(2n-1)!!$, and the number of solutions of scattering equations, which is proved to be $(2n)!!/2$. The reason for this mismatching, as pointed out at the end of section \ref{sec3}, %
is that there are $\dd\log$ forms beyond Parke-Taylor forms in the whole twisted cohomology group. One consequence is the lack of BCJ-like relations as in~\cite{Cachazo:2012uq,Cachazo:2013gna} which first appear in gauge theory amplitudes and is crucial for the color-kinematics duality~\cite{Bern:2008qj}. Therefore, an important and interesting question is how to find other $\dd\log$ forms to complete the basis of the whole twisted cohomology group.

Another direction is obviously the $Z$-integrals we defined in section \ref{sec5}, which suppose to be a special case of a very rich class of `integrals over cluster associahedron'~\cite{Arkani}, which will be explored (together with many other topics) further. We just showed the $\alpha^{\prime}$-expansion for two simple examples, however, new numerical coefficients comparing with the $Z$-integrals over $\mathcal{M}_{0,n}$ have started to appear. So, two natural questions are which kind of numbers can appear in the $\alpha^{\prime}$-expansion of such integrals, do these $\alpha^{\prime}$-expansions have the similar structure as in~\cite{Schlotterer:2012ny,Stieberger:2014hba,Brown:2018omk}? Meanwhile, it is also interesting to find the $\alpha^{\prime}$-expansion without integrating as in~\cite{Broedel:2013aza,Mafra:2016mcc}.

Apart from all questions mentioned above, another interesting question is the connection with the scattering of open strings and one closed string. Although the corresponding Koba-Nielsen factor emerges through a specific limit of Mandelstam variables, other ingredients of such string integrals are still missing in our treatment. 
What's more, it's important to find the connection between the moduli spaces of such string integrals, i.e. one puncture in the interior and $n{+}1$ punctures at the boundary of a disk, and our case. The connection between these two moduli spaces may be a clue of all missing parts.

\section*{Acknowledgement}The original idea for this work came from the study of positive geometries, canonical forms and integrals related to cluster associahedra, by Nima Arkani-Hamed, Song He and Hugh Thomas, to whom we are grateful for suggesting the project, sharing ideas and many valuable inputs. We thank Gongwang Yan for collaborations in an early stage of the project. We also thank Giulio Salvatori for inspiring discussions, and Erik Panzer for his kind instruction on the nice program HyperInt.

	\bibliographystyle{utphys}
	\bibliography{main.bib}
\end{document}